\newtheorem{theorem}{Theorem}
\newtheorem{definition}{Definition}
\newtheorem{corollary}{Corollary}
\newcommand{\ignore}[1]{}
\def    \ctl        {\mbox{\textsc{CTL }\xspace}}
\def    \CTL        {\mbox{\textsc{CTL }\xspace}}
\def    \U          {\mathcal{U}}
\def    \G          {\mathcal{G}}
\def    \F          {\mathcal{F}}
\def    \M          {{\cal M}}
\def    \X          {\mathcal{X}}
\def    \K          {\mathcal{K}}
\newcommand{\VHSM}{SHSM}
\newcommand{\HSM}{HSM}
\newcommand{\CHSM}{restricted \VHSM}
\newcommand{\boxhsm}{box}   
\newcommand{\boxes}{boxes}
\newcommand{\nnode}{node} 
\newcommand{\nnodes}{nodes} 
\newcommand{\expand} {\mathit{expn}}
\newcommand{\OUT} {\mbox{\sc out}}
\newcommand{\prop}{\mbox{{\sc true}}}
\newcommand{\vertex}{\mbox{vertex}}   
\newcommand{\vertices}{\mbox{vertices}}
\newcommand{\iin}{in}
\newcommand{\tuple}[1]{\langle #1 \rangle}
\newcommand{\word}{well-formed sequence}
\begin{document}

%
%
\title{Graded \CTL\ Model Checking for Test Generation}
\author{
Margherita Napoli and Mimmo Parente\\
Dip.to di Informatica ed Applicazioni\\
Universit\`a di Salerno, Italy\\
\href{mailto:napoli@unisa.it}{napoli@unisa.it}
\hspace{1truecm}
\href{mailto:parente@unisa.it}{parente@unisa.it}
}

\maketitle

\keywords{
Model Checking, Test Generation, Graded Temporal Logics, Hierarchical Finite State Machines.}

\begin{abstract}
\noindent
Recently there has been a great attention from the scientific community towards the use
of the model-checking technique as a tool for {\em test generation} in the simulation field.
This paper aims to provide a useful mean
to get more insights along these lines. By applying recent results
in the field of {\em graded} temporal logics, we present a new efficient model-checking algorithm
for Hierarchical Finite State Machines (\HSM), a well established symbolism long and widely
used for representing hierarchical models of discrete systems.
Performing model-checking against specifications expressed using graded temporal logics
has the peculiarity of returning more counterexamples within a {\em unique run}.
We think that this can greatly improve the efficacy
of automatically getting test cases.
In particular we ve\-ri\-fy two different models of  HSM
against branching time temporal properties.
\end{abstract}


%
%
\section{Introduction}\label{intro}
The {\em model-checking}  is a widely used technique to verify correctness of hardware and software systems.
A model checker explores the state space of a model of a given system to determine whether a given specification is satisfied.
Usually such specifications are expressed by means of formulas in a temporal logic, such as the Computational Temporal
Logics CTL, \cite{CE82}.
A very useful feature to fix the possible errors in the model is that when the model checker detects that the specification is violated then it returns a counterexample. In  the last years this feature  has also been exploited in
the simulation framework.
In fact, it is nowadays a well-established fact that formal (both software and hardware) analysis
is a valid complementary technique to simulation and testing (see e.g.,\cite{DHRPV07}).
On one side, the model checking approach, \cite{CGP99}, allows a full verification of system
components to be free of errors,
but its use is limited to small and medium sized models, due to the so-called state explosion
problem.
On the other hand the testing and simulation approaches \cite{PY} are usually applied to larger systems:
they  check the presence of errors in the system behavior through the observation of a chosen set of controlled executions.
Shortly, the efficacy of testing relies on the creation of test benches and that
of model-checking on the ability of formally defining the properties to be verified,
through temporal logic formulas.
More explicitly, the complementarity of the two techniques lies in the fact that the
counterexamples generated by a model-checker can be interpreted as test cases.
A good choice of the test suite is  the key for successful deductions of faults in
simulation processes. It is now more than a decade that model-checking is used for this purpose,
see \cite{FWA09, WASF07, A95, AB99, ABM98,GH99}.
In this context, a high level abstraction of the System Under Test (SUT),
is necessary.
Such abstraction should be simple and easy to model check,
but precise enough to serve as a basis for the generation of test cases.
This approach can be usefully adopted also in the DEVS modeling and simulation framework, \cite{Z76}.

However not surprisingly, the most challenging problem is the performance and two issues are crucial:
the choice of an efficient tool to generate the test suite and the choice of a suitable  abstract
model to check.

For the first issue, we propose the use of graded temporal logic specifications. In fact
standard model-checking tools generate only one counterexample for each
run and the check stage (of the model against a specification) is often expensive, in terms
of time resources.
We claim that it is highly desirable to get more meaningful counterexamples with a unique
run of the model checker.
For the second issue we propose the use of HSM as an abstract model of a DEVS
modeling the SUT, which preserves the
hierarchical structure while abstracting the continuous variables.
Thus we focus on how to generate simulation scenarios for
DEVS by providing a tool  which automatically generates multiple
counter-examples in an unique run, using hierarchical state machines as abstract model.
The sequence of events of each counterexample will
then be used to create a timed test trace for DEVS simulation.
In Figure~\ref{esempioAstrazione} a small example of our idea is shown
(the states labeled {\em Try1} and {\em Try2} are states on a higher hierarchy level
standing for the graph $M_1$).
Suppose we want to check whether the (timed) model in the figure satisfies the specification
(clearly false) stating that if a {\em Fail} occurs in the first attempt ({\em Try1}) of sending a message,
then an {\em Abort} event is eventually reached.
We can model-check an (untimed) over-approximation of the model (shown on the left)
obtaining the error trace {\em Start, Try1.(Send, Wait, Timeout, Fail), Try2.(Send, Wait, Ack), Success}.
This trace lets us concentrate on the portion of the model with a
potential error and can guide the simulation process to detect the error in the timed model.
\begin{figure*}[t]
\begin{picture}(70,50)(-40,-10)

\node[ExtNL=y, NLdist= 1, NLangle=165,Nw=73.0,Nh=20.0,Nmr=3.0](M2)(0,25){$M_2$}
\node[Nw=10,Nh=6,Nmr=3,Nmarks=i](n1)(-29,25){Start}
\node[Nw=10,Nh=6,Nmr=0](n2)(-13,25){Try$1$}
\node[Nw=10,Nh=6,Nmr=0](n3)(9,25){Try$2$}
\node[Nw=13,Nh=6,Nmr=3](n4)(27,30){Success}
\node[Nw=11,Nh=6,Nmr=3](n5)(27,20){Abort}
\drawedge(n1,n2){}
\drawedge(n2,n3){}
\drawedge(n3,n4){}
\drawedge(n3,n5){}
\drawedge[curvedepth=5](n2,n4){}

\drawline[dash={1.5}0](-18,22)(-28,10)
\drawline[dash={1.5}0](-8,22)(28,10)
\drawline[dash={1.5}0](4,22)(-28,10)
\drawline[dash={1.5}0](14,22)(28,10)

\node[ExtNL=y, NLdist= 1, NLangle=165,Nw=60.0,Nh=20.0,Nmr=3.0](M1)(0,0){$M_1$}
\node[Nw=10,Nh=6,Nmr=3,Nmarks=i](n6)(-23,0){Send}
\node[Nw=10,Nh=6,Nmr=3](n7)(-9,0){Wait}
\node[Nw=13,Nh=6,Nmr=3](n8)(9,-5){Timeout}
\node[Nw=10,Nh=6,Nmr=3,Nmarks=f](n9)(24,5){Ack}
\node[Nw=10,Nh=6,Nmr=3,Nmarks=f](n10)(24,-5){Fail}
\drawedge(n6,n7){}
\drawedge(n7,n8){}
\drawedge(n8,n10){}
\drawedge[curvedepth=3,ELside=r](n7,n9){}

\node[ExtNL=y, NLdist= 1, NLangle=165,Nw=73.0,Nh=20.0,Nmr=3.0](M2t)(95,25){$M_2$}
\node[Nw=10,Nh=6,Nmr=3,Nmarks=i](n11)(66,25){Start}
\node[Nw=10,Nh=6,Nmr=0](n12)(82,25){Try$1$}
\node[Nw=10,Nh=6,Nmr=0](n13)(104,25){Try$2$}
\node[Nw=13,Nh=6,Nmr=3](n14)(122,30){Success}
\node[Nw=11,Nh=6,Nmr=3](n15)(122,20){Abort}
\drawedge(n11,n12){}
\drawedge(n12,n13){{\small $t<10$}}
\drawedge(n13,n14){}
\drawedge(n13,n15){}
\drawedge[curvedepth=5](n12,n14){}

\drawline[dash={1.5}0](77,22)(67,10)
\drawline[dash={1.5}0](87,22)(123,10)
\drawline[dash={1.5}0](99,22)(67,10)
\drawline[dash={1.5}0](109,22)(123,10)

\node[ExtNL=y, NLdist= 1, NLangle=165,Nw=60.0,Nh=20.0,Nmr=3.0](M1t)(95,0){$M_1$}
\node[Nw=10,Nh=6,Nmr=3,Nmarks=i](n16)(72,0){Send}
\node[Nw=10,Nh=6,Nmr=3](n17)(86,0){Wait}
\node[Nw=13,Nh=6,Nmr=3](n18)(104,-5){Timeout}
\node[Nw=10,Nh=6,Nmr=3,Nmarks=f](n19)(119,5){Ack}
\node[Nw=10,Nh=6,Nmr=3,Nmarks=f](n110)(119,-5){Fail}
\drawedge(n16,n17){}
\drawedge(n17,n18){{\small $t>2$}}
\drawedge(n18,n110){}
\drawedge[curvedepth=3,ELside=r](n17,n19){}

\end{picture}
\caption{An over-approximation of a model (untimed on the left and timed on the right).}\label{esempioAstrazione}
\end{figure*}
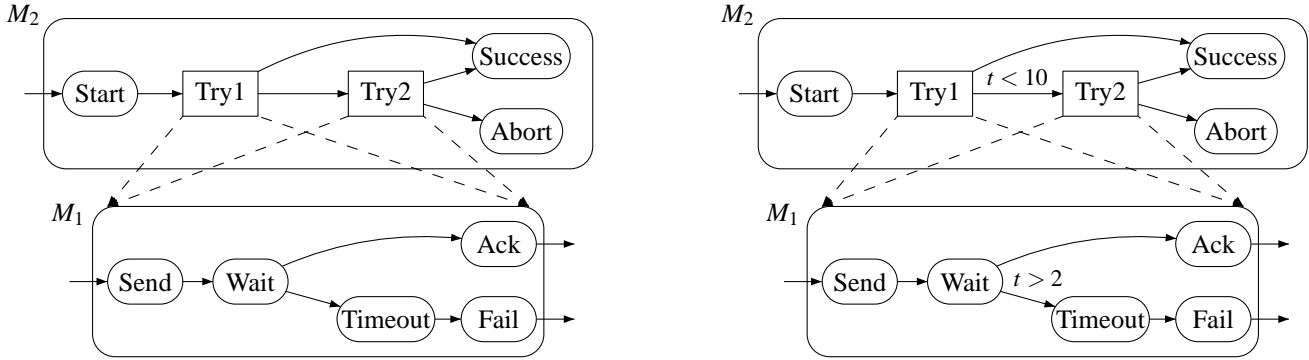
Let us now briefly detail the two notions of graded logics and HSM.
In order to get more counterexamples in a unique run we use
specifications expressed in {\em graded}-\CTL, recently introduced in \cite{FNP08}.
Graded-\CTL\ strictly extends classical \CTL\ with graded modalities: classical CTL can be used for reasoning
about the temporal behavior of systems con\-si\-de\-ring either {\em all} the possible futures
or {\em at least one} possible future, while graded-CTL uses graded extensions on
both exi\-stential and universal quantifiers.
With graded-\CTL\ formulas one can describe a constant number of future scenarios. For example,
one can express that in $k$ different cases it is possible that a waiting process never
obtains a requested resource, or that there are $k$ different ways for a system to reach a
{\em safe state} from a given state.

The notion of finite state machine with a hierarchical structure
has been used for many years for modelling discrete systems,
since the introduction of Statecharts, \cite{H87}, and is actually
applied into many fields as a specification formalism.
In particular, in the  model-checking framework, one of the most considered models is the Hierarchical State Machine (\HSM)
 (see e.g. \cite{AY01}).
A generalization of \HSM\ is introduced
in \cite{LNPP08}, as an exponentially more succinct model  where also higher level states, called {\em boxes},  are labeled with atomic propositions.
The intended meaning of such labeling is that when a box $b$
expands to a machine $M$, all the vertices of $M$ \emph{inherit} the
atomic propositions of $b$ (\emph{scope}), such that different
vertices expanding to $M$ can place $M$ into different scopes.
Such model is called a \emph{hierarchical state machine
with scope-dependent properties} (Scope-dependent Hierarchical State Machine, shortly \VHSM).

Our contribution aims in providing also strong theoretical evidence of
the soundness of our approach.
In particular we study the problem of verifying whether an \VHSM\ models a given graded-\ctl\ formula.
We first give an algorithm to solve the graded-\ctl\ model-checking  of an \HSM, and then
we extend it to model-check general \VHSM s.
We show that the problem has the same computational complexity as
 \ctl\ model checking, and we show how to solve it both for  \HSM\ and  \VHSM, with an extra factor in the exponent which is
logarithmic in the maximal grading constant occurring in the \CTL\ formula.
%
Let us stress that the experimental results for flat models reported in \cite{FMNPS10} shows that
 this  extra factor does not have real effects in the running time of the algorithms
(currently we are implementing also the algorithms presented here for
hierarchical structures and the initial tests are very promising).

The rest of the paper is organized as follows:
in Sections~\ref{sec:gradedDefinition}
and~\ref{sec:hierarDefinition}
we give basic definitions and known results
of graded-CTL, and of SHSM, respectively;
in Section~\ref{sec:Algo} we give the algorithm to model-check \VHSM\
against graded-CTL  specifications.
In Section~\ref{sec:Conclusions} we give our conclusions.

\section{Graded \CTL}\label{sec:gradedDefinition}
In this section we first recall the definitions of \ctl and then give that of
graded-\ctl,
see \cite{FNP08}.
The temporal logic \ctl \cite{CE82} is a branching-time logic in
which each temporal operator, expressing properties about a
possible future, has to be preceded either by an  existential or
by an universal path quantifier. So, in \ctl
one can express properties that have to be true either
\emph{immediately after now} ($\X$), or \emph{each time from now}
($\G$), or \emph{from now until something happens} ($\U$), and it
is possible to specify that each property must hold either in
\emph{some possible futures} ($E$) or in \emph{each possible
future} ($A$). Formally, given a finite set of \emph{atomic
propositions} $AP$, \ctl is the set of formulas
$\varphi$ defined as follows:
    $$\varphi := p\ |\ \neg\psi_1\ |\ \psi_1 \wedge \psi_2\ |\ E \X \psi_1\ |\ E \G \psi_1\ |\ E \psi_1 \U \psi_2$$
where $p \in AP$ is an atomic proposition and $\psi_1$ and
$\psi_2$ are \ctl formulas.
The semantics of a \ctl formula is defined with respect to a
\emph{Kripke Structure} by means of the classical relation
$\models$. As usual, a Kripke structure over a set of atomic
propositions $AP$, is a tuple  $\K = \langle S, s_{in}, R, L
\rangle$, where $S$ is a finite set of states, $s_{in} \in S$ is
the initial state, $R \subseteq S \times S$ is a transition
relation with the property that for each $s \in S$ there is $t \in
S$ such that $(s,t) \in R$, and $L : S \rightarrow 2^{AP}$ is a
labeling function.
A path in $\K$ is denoted by the sequence of states $\pi = \langle
s_0, s_1, \ldots s_n\rangle$ or by $\pi = \langle s_0, s_1, \ldots
\rangle$, if it is infinite. The length of a path, denoted by
$|\pi|$, is the number of states in the sequence, and $\pi[i]$
denotes the $i$-th state $s_i$.
Then, the relation $\models$ for a state $s \in S$ of $\K$ is
iteratively defined as follows:
\begin{itemize}
\item
$(\K, s) \models p \in AP$ iff $p \in L(s)$;
\item
$(\K, s) \models \neg\psi_1$ iff $\neg( (\K, s) \models \psi_1)$
(in short, $(\K, s) \not\models \psi_1$);
\item
$(\K, s) \models \psi_1 \wedge \psi_2$ iff $(\K, s) \models
\psi_1$ and $(\K, s) \models \psi_2$;
\item
$(\K, s) \models E \X \psi_1$ iff there exists $s' \in S$ such
that $(s, s') \in R$ and $(\K, s') \models \psi_1$ (the path
$\langle s, s' \rangle$ is called an \emph{evidence} of the formula
$\X \psi_1$);
\item
$(\K, s) \models E \G \psi_1$ iff there exists an infinite path
$\pi$ starting from $s$ (i.e., $\pi[0] = s$) such that for all $j
\geq 0$, $(\K, \pi[j]) \models \psi_1$ (the path $\pi$ is called
an \emph{evidence} of the formula $\G \psi_1$);
\item
$(\K, s) \models E \psi_1 \U \psi_2$ iff there exists a finite
path $\pi$ with length $|\pi| = r+1$ starting from $s$ such that
$(\K, \pi[r]) \models \psi_2$ and, for all $0 \leq j < r$, $(\K,
\pi[j]) \models \psi_1$ (the path $\pi$ is called an
\emph{evidence} of the formula $\psi_1 \U \psi_2$);
\end{itemize}
We say that a Kripke structure $\K = \langle S, s_{in}, R, L
\rangle$ \emph{models} a \ctl formula $\varphi$ iff $(\K, s_{in})
\models \varphi$.
Note that we have expressed the syntax of \ctl with one of the
possible minimal sets of operators. Other temporal operators as well as
the universal path quantifier $A$,  can be easily derived
from those.
\textbf{Graded-\ctl} extends the classical \ctl by adding graded modalities on the quantifier
operators. Graded modalities specify in how many possible futures  a given path property has to hold,
and thus generalize \ctl allowing to reason
about more than a given number of possible distinct future
behaviors. Let us first define the notion of {\em distinct}.
Let $\K = \langle S, s_{in}, R,
L \rangle$ be a Kripke structure. We say that two paths $\pi_1$
and $\pi_2$ on $\K$ are \emph{distinct} if there exists an index
$0 \leq i < \min\{ |\pi_1|, |\pi_2| \}$ such that $\pi_1[i] \neq
\pi_2[i]$. Observe that from this definition if a path is the
prefix of another path, then they are not distinct.
The \emph{graded existential path quantifier} $E^{>k}$, requires
the existence of $k+1$ pairwise distinct evidences of a path-formula. Given a
set of atomic proposition $AP$, the syntax of graded-\ctl is
defined as follows:
    $$\varphi := p\ |\ \neg \psi_1\ |\ \psi_1 \wedge \psi_2\ |\ E^{> k} \X \psi_1\ |\ E^{> k} \G \psi_1\ |\  E^{> k} \psi_1 \U \psi_2$$
where $p \in AP$, $k$ is a non-negative integer and $\psi_1$ and
$\psi_2$ are graded-\ctl formulas.
The semantics of graded-\ctl is still defined with respect to a
Kripke structure $\K = \langle S, s_{in}, R, L \rangle$ on the set
of atomic propositions $AP$. In particular, for formulas of the
form $p$, $\neg\psi_1$ and $\psi_1 \wedge \psi_2$ the semantics is the same as in the classical \ctl.
For the remaining formulas, the semantics is defined as follows:
\begin{itemize}
\item
$(\K, s) \models E^{> k} \theta$, with $k \geq 0$ and either $\theta = \X \psi_1$ or $\theta = \G \psi_1$ or $\theta = \psi_1 \U \psi_2$, iff there exist $k+1$ pairwise
distinct evidences of $\theta$ starting from $s$.
\end{itemize}
It is easy to observe that classical \ctl is a proper fragment of
graded-\ctl since the simple graded formula $E^{>1} \X p$ cannot be
expressed in \ctl, whereas any \ctl formula is also a graded-\ctl
formula  (note that $E^{>0} \theta$ is equivalent to $E \theta$).
We can also consider the graded extension of the universal
quantifier, $A^{\leq k}$, with the meaning that \emph{all the
paths starting from a node $s$, but at most $k$ pairwise distinct
paths, are evidences of a given path-formula}. The quantifier
$A^{\leq k}$ is the dual operator of $E^{>k}$ and can obviously be
re-written in terms of $\neg E^{>k}$. However, while $A^{\leq k}
\X \psi_1$ and $A^{\leq k} \G \psi_1$ can be easily re-written
respectively as $\neg E^{>k} \X \neg \psi_1$ and $\neg E^{>k} \F
\neg\psi_1$, the transformation of the formula $A^{\leq k} \psi_1
\U \psi_2$ with $k>0$ in terms of $\neg E^{>k}$ deserves more care
(see \cite{FNP08} for a detailed treatment).
\ignore{
In fact, we have that $A^{\leq k} \psi_1 \U \psi_2$
is equivalent to $\neg E^{> k} \neg(\psi_1 \U \psi_2)$ (note that this formula is
not a graded-\ctl formula because of the occurrence of the
innermost negation), that can be translated in graded-\ctl in the
following way:
\begin{eqnarray}
    A^{\leq k} \psi_1 \U \psi_2 &\Longleftrightarrow& \neg E^{>k} \G (\psi_1 \wedge \neg\psi_2)\ \wedge \neg E^{>k} (\psi_1 \wedge \neg\psi_2) \U (\neg\psi_1 \wedge \neg\psi_2) \wedge \label{for:ForallUntil} \\
    & & \bigwedge_{i=0}^{k-1} (\ \neg E^{>k-1-i} \G (\psi_1 \wedge \neg\psi_2)\ \vee\ \neg E^{>i} (\psi_1 \wedge \neg\psi_2) \U (\neg\psi_1 \wedge \neg\psi_2)\ ) \nonumber
\end{eqnarray}

In fact observe that a path not
satisfying $\psi_1 \U \psi_2$ is a path that satisfies either
$\theta_1 = \G (\psi_1 \wedge \neg\psi_2)$ or $\theta_2 = (\psi_1
\wedge \neg\psi_2) \U (\neg\psi_1 \wedge \neg\psi_2)$ (clearly,
the paths satisfying $\theta_1$ are all distinct from the paths
satisfying $\theta_2$). Therefore the formula $E^{>k} \neg(\psi_1
\U \psi_2)$ holds in $s$, if $k+1$ pairwise distinct paths stem
from this, each satisfying either $\theta_1$ or $\theta_2$.
}

The \textbf{graded-\ctl model-checking} is the problem of
verifying whether a Kripke structure $\K$ models a graded-\ctl
formula $\varphi$.
The complexity of the graded-\ctl model-checking problem
is linear with respect to the size of
the Kripke structure and to the size of the formula,
(this latter being the number of the temporal and the boolean
operators occurring in it). Let us remark that this complexity is
independent from the integers $k$ occurring in the formula.

\section{Scope-dependent  Hierarchical State Machines}\label{sec:hierarDefinition}
In this section we  formally define the Scope-dependent Hierarchical State Machines and
recall some known results.
The Scope-dependent Hierarchical State Machines are defined as follows.
\begin{definition}
A {\em Scope-dependent Hierarchical State Machine}
(\VHSM) over $AP$ is a tuple ${\cal M}= (M_1,M_2,\ldots,M_h)$, each
$M_i=(V_i,in_i,\OUT_i, \prop_i,\expand_i, E_i)$ is called {\em
machine} and consists of:
\begin{itemize}
\item a finite set of \vertices\ $V_i$,
an {\em initial} \vertex\ $in_i \in V_i$
and a set of {\em output} \vertices\ $\OUT_i \subseteq V_i$;
\item a labeling function $ \prop_i: V_i \longrightarrow 2^{AP}$ that
maps each \vertex\ with a set of atomic propositions;
\item an expansion mapping $\expand_i: V_i \longrightarrow \{0,1,\ldots,h\} $
such that $\expand_i(u) < i$, for each $u \in V_i$, and
$\expand_i(u)=0$, for each $u \in \{in_i \} \cup \OUT_i$;
\item a set of edges $E_i$ where each edge is
either a couple $(u, v)$, with $u, v \in V_i$ and
$\expand_i(u) =0$, or a triple $((u,z),v)$ with
$u,v\in V_i$, $\expand_i(u)=j, j>0$, and
$z \in \OUT_j$.
\end{itemize}
\end{definition}
\begin{figure*}[t]
\begin{center}
\begin{picture}(137,52)(-8,-52)
\node[NLangle=163.0,NLdist=35.0,Nw=74.0,Nh=17.0,Nmr=2.98](n0)(34.55,-16.19){ $M_3$}

\node[Nfill=y,fillgray=0.9,NLangle=-90.0,NLdist=6.0,Nw=15.0,Nmr=1.0](n39)(46.55,-16.19){
$\{p_3\}$} \nodelabel[NLangle=0.0](n39){$b_3^1$}

\node[Nfill=y,fillgray=1.0,Nw=2.2,Nh=2.2,Nmr=1.1](n41)(40.55,-16.19){}

\node[Nfill=y,fillgray=0.9,NLangle=0.0,Nw=15.0,Nmr=1.0](n56)(22.55,-16.19){
$b_3^0$} \nodelabel[NLangle=-90.0,NLdist=6.0](n56){$\emptyset$}

\node[Nfill=y,fillgray=1.0,Nw=2.2,Nh=2.2,Nmr=1.1](n57)(28.55,-16.19){}

\node[Nfill=y,fillgray=1.0,Nw=2.2,Nh=2.2,Nmr=1.1](n58)(16.65,-16.19){}

\node[Nfill=y,fillgray=1.0,NLangle=0.0,ilength=8.0,Nmarks=i,Nw=4.5,Nh=4.5,Nmr=2.25](n108)(4.55,-16.19){
$\iin_3$}
\nodelabel[NLangle=-90.0,NLdist=5.0](n108){$\emptyset$}

\drawedge(n108,n58){}

\node[Nfill=y,fillgray=1.0,Nw=2.2,Nh=2.2,Nmr=1.1](n76)(52.55,-16.19){}

\node[Nfill=y,fillgray=1.0,NLangle=0.0,flength=8.0,Nmarks=f,Nw=4.5,Nh=4.5,Nmr=2.25](n188)(64.55,-16.19){
$z_3$} \nodelabel[NLangle=-90.0,NLdist=5.0](n188){
$\{p_3,p_2,p_1\}$}

\drawedge(n57,n41){}

\drawedge(n76,n188){}

\drawloop[loopdiam=5.0,loopangle=50.0](n108){}

\drawloop[loopdiam=5.0,loopangle=50.0](n188){}

\node[Nfill=y,fillgray=0.8,NLangle=149.0,NLdist=20.5,Nw=44.0,Nh=17.0,Nmr=2.98](n230)(102.71,-40.32){
$M_1$}

\node[Nfill=y,fillgray=1.0,NLangle=0.0,ilength=8.0,Nmarks=i,Nw=4.5,Nh=4.5,Nmr=2.25](n232)(87.71,-40.32){
$\iin_1$}
\nodelabel[NLangle=-90.0,NLdist=5.0](n232){$\emptyset$}

\node[Nfill=y,fillgray=1.0,NLangle=0.0,flength=8.0,Nmarks=f,Nw=4.5,Nh=4.5,Nmr=2.25](n234)(117.71,-40.32){
$z_1$} \nodelabel[NLangle=-90.0,NLdist=5.0](n234){$\{p_1\}$}

\drawloop[loopdiam=5.0,loopangle=50.0](n232){}

\drawloop[loopdiam=5.0,loopangle=50.0](n234){}

\drawedge(n232,n234){}

\node[Nfill=y,fillgray=0.9,NLangle=163.0,NLdist=35.0,Nw=74.0,Nh=17.0,Nmr=2.89](n95)(34.55,-40.52){
$M_2$}

\node[Nfill=y,fillgray=0.8,NLangle=-90.0,NLdist=6.0,Nw=15.0,Nmr=1.0](n96)(46.55,-40.52){
$\{p_2\}$} \nodelabel[NLangle=0.0](n96){$b_2^1$}

\node[Nfill=y,fillgray=1.0,Nw=2.2,Nh=2.2,Nmr=1.1](n97)(40.55,-40.52){}

\node[Nfill=y,fillgray=0.8,NLangle=0.0,Nw=15.0,Nmr=1.0](n98)(22.55,-40.52){
$b_2^0$} \nodelabel[NLangle=-90.0,NLdist=6.0](n98){
$\emptyset$}

\node[Nfill=y,fillgray=1.0,Nw=2.2,Nh=2.2,Nmr=1.1](n99)(28.55,-40.52){}

\node[Nfill=y,fillgray=1.0,Nw=2.2,Nh=2.2,Nmr=1.1](n100)(16.55,-40.52){}

\node[Nfill=y,fillgray=1.0,NLangle=0.0,ilength=8.0,Nmarks=i,Nw=4.5,Nh=4.5,Nmr=2.25](n101)(4.55,-40.52){
$\iin_2$}
\nodelabel[NLangle=-90.0,NLdist=5.0](n101){$\emptyset$}

\drawedge(n101,n100){}

\node[Nfill=y,fillgray=1.0,Nw=2.2,Nh=2.2,Nmr=1.1](n102)(52.55,-40.52){}

\node[Nfill=y,fillgray=1.0,NLangle=0.0,flength=8.0,Nmarks=f,Nw=4.5,Nh=4.5,Nmr=2.25](n103)(64.55,-40.52){$z_2$}
\nodelabel[NLangle=-90.0,NLdist=5.0](n103){$\{p_2,p_1\}$}

\drawedge(n99,n97){}

\drawedge(n102,n103){}

\drawloop[loopdiam=5.0,loopangle=50.0](n101){}

\drawloop[loopdiam=5.0,loopangle=50.0](n103){}

\end{picture}
\end{center}

\caption{A simple \VHSM\ $\M$.  }\label{esempioCHSM}
\end{figure*}
In the rest of the paper we use $h$ as the number of machines of
an \VHSM\ $\M$ and $M_h$ is called {\em top-level} machine.
We assume that the sets of \vertices\ $V_i$ are pairwise disjoint.
The set of all \vertices\ of $\M$ is $V=\bigcup_{i=1}^{h} V_i$. The
mappings $\expand: V \longrightarrow \{0,1,\ldots,h\} $ and $\prop:
V \longrightarrow 2^{AP}$ extend the mappings $\expand_i$ and
$\prop_i$, respectively. If $\expand(u)=j>0$, the \vertex\ $u$
expands to the machine $M_j$ and is called {\em \boxhsm}. When
$\expand(u)=0$, $u$ is called a {\em \nnode}. Let us define the
closure $\expand^{+}: V \longrightarrow 2^{\{0,1,\ldots,h\}}$, as:
$h \in \expand^{+}(u)$ if either $h=\expand(u)$ or there exists
$u'\in V_{\expand(u)}$ such that $h \in \expand^{+}(u')$. We say
that a \vertex\ $u$ is an \emph{ancestor} of $v$ and $v$ is a
\emph{descendant} from $u$ if $v \in V_h$, for $h \in
\expand^{+}(u)$.

A vertex $v\in V_i$ is called a \emph{successor} of $u\in V_i$ if there is an edge
 $(u, v)\in E_i$, and it is called a \emph{z-successor} of $u$,
 for $z \in \OUT_{\expand(u)}$,  if $((u,z),v)\in E_i $.

An \HSM\ is an \VHSM\ such that  $\prop(b)= \emptyset$,
for any box $b$.

As an example of an \VHSM\ $\M$ see
Figure~\ref{esempioCHSM}, where $p_1, p_2, p_3$ are atomic
propositions labeling \nnodes\ and \boxes\ of $\M$, $\iin_i$ and $z_i$
are respectively entry \nnodes\ and exit \nnodes\ for $i=1,2,3$, and
$\expand(b^i_j)=j-1$ for $i=0,1$ and $j=2,3$.

\ignore{
\begin{definition}\label{restricted}
A restricted \VHSM\ ${\cal M}$ is an \VHSM\
where for all vertices $u,v$
such that $u$ is an ancestor of $v$ in $\M$ it holds:\\
\centerline{$\prop(u) \cap \prop(v)
= \emptyset .$}
\end{definition}

Such a restriction is quite natural and still allows us to succinctly represent
interesting systems. Note that
the \VHSM\ of Figure~\ref{esempioCHSM} is also restricted.
}


\noindent{\large\bf Semantics.}
The semantics of an \VHSM\ $\M$
is given by a {\em flat}  Kripke structure, denoted $\M^F$.

A sequence of \vertices\ $\alpha=u_1 \ldots u_m$, $1 \leq m$, is
called a {\em \word} if $u_{\ell+1} \in V_{\expand(u_\ell)}$, for
$\ell =1,\ldots,m-1$.
Moreover, $\alpha$ is also {\em complete} when $u_1 \in V_h$ and
$u_m$ is a \nnode.

A state of $\M^F$ is $\langle \alpha\rangle$ where $\alpha$ is
a complete \word\ of $\M$.
Note that the length of a complete \word\ is at most $h$, therefore
the number of states  of $\M^F$ is at most exponential in the number of
machines composing $\M$.
Transitions of $\M^F$ are obtained by using as templates the
edges of $\M$.
Figure~\ref{peggioDiNoi} shows the Kripke structure
which is equivalent to the \VHSM\ of Figure~\ref{esempioCHSM}.
We formally define $\M^F$ as follows.
\ignore{
\begin{definition}\label{semDef}
Given an \VHSM\ $\M=(M_1,M_2,\ldots,M_h)$, the corresponding flat
Kripke structure $\M^F$ is defined as:
\begin{itemize}
\item The states of $\M^F$ are $\langle u_1  \dots  u_m \rangle$, for
$1 \leq m \leq h$, where $u_1 u_2  \ldots  u_m $ is a complete
\word.

\item The initial state of $\M^F$ is $\langle in_h \rangle $, where
$in_h$ is the initial \vertex\ of $M_h$
(the top-level machine of $\M$).

\item
If $X=\langle u_1  \ldots  u_m\rangle$ and $Y=\langle v_1  \ldots  v_n
\rangle$ are states, then $(X,Y)$ is a
transition of $\M^F$ if there is an edge $e \in E_i$, $1 \leq i \leq h$,
such that one of the following cases occurs:
\begin{itemize}

\item $n=m$, $v_j=u_j$, for $1 \leq j < n$,
 and $e=(u_m,v_n)$, that is the edge connects two \nnodes;

\item $m=n-1$, $v_j=u_j$, for $1 \leq j < n-1$, $e=(u_m, v_{n-1})$, and
$v_n = in_{\expand(v_{n-1})}$, that is the edge connects a \nnode\ $u_m$ to
a \boxhsm\ $v_{n-1}$ and $v_n$ is the initial \vertex\ of the
machine which $v_{n-1}$ expands to;

\item $m-1=n$, $v_j=u_j$, for $1 \leq j < n$, and
$e=((u_{m-1},u_m),v_n)$, that is the edge connects a \boxhsm\
$u_{m-1}$ to a \nnode\ $v_n$, through the output \vertex\ $u_m \in \OUT_{\expand(u_{m-1})}$;

\item $n=m$, $v_j=u_j$, for $1 \leq j < n-1$, $e=((u_{m-1},u_m),v_{n-1})$,
 $u_m \in \OUT_{\expand(u_{m-1})}$, and
$v_n = in_{\expand(v_{n-1})}$,
that is the edge connects two \boxes.
\end{itemize}

\item
The labeling of $\M^F$ is such that a state
$X=\langle u_1 \ldots u_m \rangle$ is labeled by the set of atomic propositions
$\prop(X)=\cup_{j=1}^{m} \prop(u_j)$.
\end{itemize}

\end{definition}

A \emph{run} of an \VHSM\ $\M$ is a path of $\M^F$ starting from $\iin_h$.

\noindent{\bf An alternative recursive definition of $\M^F$.}\\
}
Given an \VHSM\ $\M=(M_1,M_2,\ldots,M_h)$, it is immediate to
observe that the tuple $\M_j=(M_1,M_2,\ldots,M_j)$, $1 \leq j \leq
h$, is an \VHSM\ as well. Clearly, $\M_h=\M$. In the following, we
sketch how to compute recursively the flat Kripke structures
$\M_j^F$.

We start with $\M_1^F$ which is obtained from machine $M_1$ by
simply replacing each \vertex\ $u$ with a state $\langle u \rangle$
labeled with $\prop (\langle u \rangle)=\prop(u)$ (recall that by
definition all \vertices\ of $\M_1$ are \nnodes). Thus, for each
edge $(v,w) \in E_1$ we add a transition $(\langle v \rangle,\langle
w \rangle)$ in $\M_1^F$.

For $j>1$, $\M_j^F$ is obtained from $M_j$ by simply replacing each
\boxhsm\ $u$ of $M_j$ with a copy of the Kripke structure
$\M_{\expand(u)}^F$. More precisely, for each \nnode\ $u \in V_j$,
$\langle u\rangle $ is a state of $\M_j^F$ which is labeled with
$\prop(u)$ and for each \boxhsm\ $u\in V_j$ and state $\langle
\alpha \rangle $ of $\M_{\expand(u)}^F$, $\langle u \alpha \rangle$
is a state of $\M_j^F$ and is labeled with $\prop(u) \cup
\prop(\langle  \alpha \rangle)$. The transitions of
$\M_{\expand(u)}^F$ are all inherited in $\M_j^F$, that is, there is
a transition $(\langle u \alpha \rangle,\langle u \beta \rangle)$ in
$\M_j^F$ for each  transition $(\langle  \alpha \rangle,\langle
\beta \rangle)$ of $\M_{\expand(u)}^F$. The remaining transitions of
$\M_j^F$ correspond to the edges of $M_j$:
\begin{itemize}
\item for each \nnode\ $v\in V_j$ and edge $(u,v) \in E_j$
(resp. $((u,z),v) \in E_j$) there
is a transition
from $\langle u \rangle$ (resp. $\langle u z \rangle$) to $\langle v\rangle$;

\item for each \boxhsm\ $v\in V_j$ and
edge $(u,v) \in E_j$ (resp. $((u,z),v) \in E_j$) there
is a transition from $\langle u \rangle$ (resp.  $\langle u z \rangle$)
to $\langle v\, in_{\expand(v)}\rangle$.
\end{itemize}

A \boxhsm\ $u$ expanding into
$M_j$ is a placeholder for  $\M_j^F$ and determines
a subgraph in  $\M^F$ isomorphic to  $\M_j^F$.
This is emphasized in Figure~\ref{peggioDiNoi}, where
we have enclosed in shades of the same shape and color the isomorphic
subgraphs corresponding to a same graph $\M_j^F$.
Therefore, Figure~\ref{peggioDiNoi} also illustrates
the recursive definition of $\M^F$.

If two distinct \boxes\ $u_1$ and $u_2$ both expand into the same
machine $M_j$, that is $\expand(u_1)=\expand(u_2)=h$,
then the states of $\M_j^F$ appear in $\M^F$ in two different scopes,
possibly labeled with different sets of atomic propositions:
in one scope this set contains $\prop(u_1)$ and in the other it
contains $\prop(u_2)$.
The atomic propositions labeling \boxes\ represent
\emph{scope-properties}.
In fact, for a given \boxhsm\ $u$, the set $\prop(u)$ of atomic propositions
is meant to hold true at $u$ and at all its possible descendants.

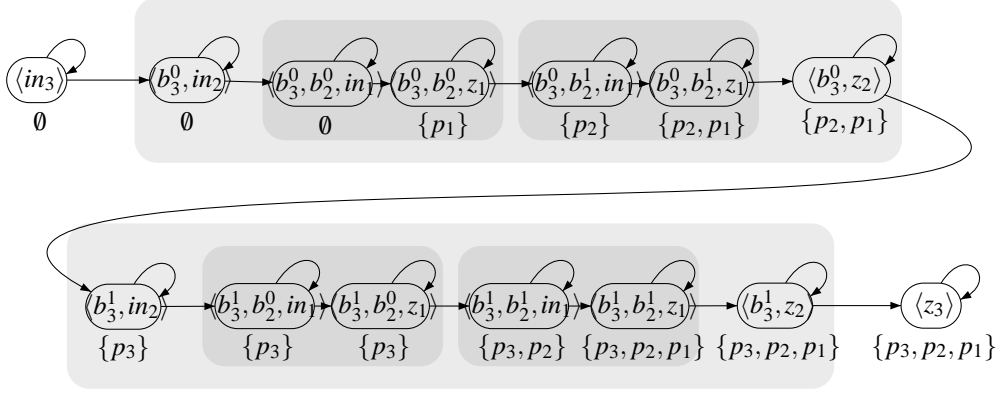
\begin{figure*}[t]
\begin{picture}(213,58)(0,-58)

\drawpolygon[fillgray=0.92,Nframe=n,arcradius=3](18,-1)(18,-23)(120,-23)(120,-1)

\drawpolygon[fillgray=0.85,Nframe=n,arcradius=3](35,-4)(35,-20)(67,-20)(67,-4)

\drawpolygon[fillgray=0.85,Nframe=n,arcradius=3](69,-4)(69,-20)(101,-20)(101,-4)

\drawpolygon[fillgray=0.92,Nframe=n,arcradius=3](9,-31.07)(9,-53.07)(111,-53.07)(111,-31.07)

\drawpolygon[fillgray=0.85,Nframe=n,arcradius=3](27,-34.07)(27,-50.07)(59,-50.07)(59,-34.07)

\drawpolygon[fillgray=0.85,Nframe=n,arcradius=3](61,-34.07)(61,-50.07)(93,-50.07)(93,-34.07)

\node[NLangle=0.0,Nh=6.0,Nmr=3.0](n2)(4.98,-12.0){
$\tuple{\iin_3}$} \nodelabel[NLangle=-90.0,NLdist=5.5](n2){
$\emptyset$}

\node[NLangle=0.0,Nw=13.0,Nh=6.0,Nmr=3.0](n6)(111.98,-12.0){
$\tuple{b_3^0,z_2}$} \nodelabel[NLangle=-90.0,NLdist=5.5](n6){
 $\{p_2,p_1\}$}

\node[NLangle=0.0,Nw=10.0,Nh=6.0,Nmr=3.0](n7)(24.98,-12.0){
$\tuple{b_3^0,\iin_2}$}
\nodelabel[NLangle=-90.0,NLdist=5.5](n7){$\emptyset$}

\node[NLangle=0.0,Nw=13.0,Nh=6.0,Nmr=3.0](n14)(58.57,-12.48){
$\tuple{b_3^0,b_2^0,z_1}$}
\nodelabel[NLangle=-90.0,NLdist=5.5](n14){
 $\{p_1\}$}

\node[NLangle=0.0,Nw=13.2,Nh=6.0,Nmr=3.0](n15)(42.97,-12.48){
$\tuple{b_3^0,b_2^0,\iin_1}$}
\nodelabel[NLangle=-90.0,NLdist=5.5](n15){
 $\emptyset$}

\node[NLangle=0.0,Nw=13.0,Nh=6.0,Nmr=3.0](n57)(92.97,-12.48){
$\tuple{b_3^0,b_2^1,z_1}$}
\nodelabel[NLangle=-90.0,NLdist=5.5](n57){
 $\{p_2,p_1\}$}

\node[NLangle=0.0,Nw=13.2,Nh=6.0,Nmr=3.0](n58)(77.47,-12.48){
$\tuple{b_3^0,b_2^1,\iin_1}$}
\nodelabel[NLangle=-90.0,NLdist=5.5](n58){
 $\{p_2\}$}
\node[NLangle=0.0,Nh=6.0,Nmr=3.0](n63)(123.88,-42.01){
$\tuple{z_3}$} \nodelabel[NLangle=-90.0,NLdist=5.5](n63){
$\{p_3,p_2,p_1\}$}

\node[NLangle=0.0,Nw=10.0,Nh=6.0,Nmr=3.0](n64)(103.11,-42.01){
$\tuple{b_3^1,z_2}$} \nodelabel[NLangle=-90.0,NLdist=5.5](n64){
$\{p_3,p_2,p_1\}$}

\node[NLangle=0.0,Nw=10.0,Nh=6.0,Nmr=3.0](n65)(16.49,-42.15){
$\tuple{b_3^1,\iin_2}$}
\nodelabel[NLangle=-90.0,NLdist=5.5](n65){$\{p_3\}$}

\node[NLangle=0.0,Nw=13.0,Nh=6.0,Nmr=3.0](n68)(50.68,-42.07){
$\tuple{b_3^1,b_2^0,z_1}$}
\nodelabel[NLangle=-90.0,NLdist=5.5](n68){
 $\{p_3\}$}

\node[NLangle=0.0,Nw=13.2,Nh=6.0,Nmr=3.0](n69)(35.28,-42.07){
$\tuple{b_3^1,b_2^0,\iin_1}$}
\nodelabel[NLangle=-90.0,NLdist=5.5](n69){
 $\{p_3\}$}

\node[NLangle=0.0,Nw=13.0,Nh=6.0,Nmr=3.0](n72)(85.18,-42.07){
$\tuple{b_3^1,b_2^1,z_1}$}
\nodelabel[NLangle=-90.0,NLdist=5.5](n72){
 $\{p_3,p_2,p_1\}$}
\node[NLangle=0.0,Nw=13.2,Nh=6.0,Nmr=3.0](n73)(69.08,-42.07){
$\tuple{b_3^1,b_2^1,\iin_1}$}
\nodelabel[NLangle=-90.0,NLdist=5.5](n73){
 $\{p_3,p_2\}$}

\drawedge(n2,n7){}

\drawedge(n68,n73){}

\drawedge(n64,n63){}

\drawloop[ELpos=15,loopdiam=4.0,loopangle=38.0](n2){ }

\drawloop[ELpos=15,loopdiam=4.0,loopangle=38.0](n7){ }

\drawloop[ELpos=15,loopdiam=4.0,loopangle=38.0](n6){ }

\drawloop[ELpos=15,loopdiam=4.0,loopangle=38.0](n65){ }

\drawloop[ELpos=15,loopdiam=4.0,loopangle=38.0](n64){ }

\drawloop[ELpos=15,loopdiam=4.0,loopangle=38.0](n63){ }

\drawloop[ELpos=15,loopdiam=4.0,loopangle=38.0](n73){ }

\drawloop[ELpos=15,loopdiam=4.0,loopangle=38.0](n15){ }

\drawloop[ELpos=15,loopdiam=4.0,loopangle=38.0](n14){ }

\drawloop[ELpos=15,loopdiam=4.0,loopangle=38.0](n58){ }

\drawloop[ELpos=15,loopdiam=4.0,loopangle=38.0](n57){ }

\drawloop[ELpos=15,loopdiam=4.0,loopangle=38.0](n69){ }

\drawloop[ELpos=15,loopdiam=4.0,loopangle=38.0](n68){ }
\drawloop[loopdiam=4.0,loopangle=38.0](n72){ }

\drawedge(n14,n58){}

\drawedge(n15,n14){}

\drawedge(n58,n57){}

\drawedge(n7,n15){}

\drawedge(n65,n69){}

\drawedge(n73,n72){}

\drawedge(n72,n64){}

\drawedge(n69,n68){}
\drawbpedge(n6,-18,85.88,n65,-203,66.37){}

\drawedge(n57,n6){}
\end{picture}
\caption{The Kripke structure obtained by flattening the \VHSM\ $\M$
of Figure~\ref{esempioCHSM}. }\label{peggioDiNoi}
\end{figure*}

\noindent{\large\bf Succinctness.}
Clearly, any  hierarchical structure, either an \HSM\ or an \VHSM,
is in general more succinct than a
traditional Kripke structure. Scope properties make \VHSM\
possibly even more succinct than \HSM.
In fact, two isomorphic subgraphs of a Kripke structure which differ only on the labeling
of the vertices can be represented in an \VHSM\ by the single machine $M_j$,
while it should be represented by two different machines in an \HSM.
Let us recall two main results from \cite{LNPP08} on the succinctness
of these models,
where a  restricted \VHSM\ ${\cal M}$ is an \VHSM\
where for all vertices $u,v$
such that $u$ is an ancestor of $v$ in $\M$ it holds that
$\prop(u) \cap \prop(v)
= \emptyset$.

\begin{theorem}[\cite{LNPP08}]\label{theo:succintezza1}
Restricted \VHSM s can be exponentially more succinct than
\HSM s and finite state machines.
\end{theorem}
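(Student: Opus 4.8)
The plan is to exhibit a family $(\M_h)_{h\ge 1}$ of restricted \VHSM s of size $\Theta(h)$ whose flattenings $\M_h^F$ cannot be described by an \HSM\ or a finite state machine of size below $2^h$. The family is the evident generalization of the \VHSM\ of Figure~\ref{esempioCHSM}: $M_1$ has only an entry \nnode\ $in_1$ with $\prop(in_1)=\emptyset$ and an exit \nnode\ $z_1$ with $\prop(z_1)=\{p_1\}$, joined by an edge; and for $2\le j\le h$, $M_j$ has \nnodes\ $in_j,z_j$ and two \boxes\ $b_j^0,b_j^1$ with $\expand(b_j^0)=\expand(b_j^1)=j-1$, chained as $in_j\to b_j^0\to b_j^1\to z_j$ (through the exit of $M_{j-1}$), with $\prop(in_j)=\emptyset$, $\prop(z_j)=\{p_1,\dots,p_j\}$, $\prop(b_j^0)=\emptyset$ and $\prop(b_j^1)=\{p_j\}$; self-loops on entries and exits are added exactly as in Figure~\ref{esempioCHSM}. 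Every \boxhsm\ $b_j^1$ carries label $\{p_j\}$ while all its descendants lie in $M_1,\dots,M_{j-1}$ and carry labels $\subseteq\{p_1,\dots,p_{j-1}\}$, so boxes are disjoint from their scopes; hence each $\M_h$ is a \emph{restricted} \VHSM, and plainly $|\M_h|=\Theta(h)$.

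First I would prove, by induction on $h$, that the set of labels occurring on the states of $\M_h^F$ is exactly $2^{\{p_1,\dots,p_h\}}$, i.e. all $2^h$ subsets appear. The base case is immediate. For the inductive step, using the recursive description of $\M_h^F$, the copy of $\M_{h-1}^F$ placed inside the \boxhsm\ $b_h^0$ (scope $\emptyset$) contributes by inheritance precisely the subsets of $\{p_1,\dots,p_{h-1}\}$, while the copy placed inside $b_h^1$ (scope $\{p_h\}$) contributes precisely the sets $\{p_h\}\cup X$ with $X\subseteq\{p_1,\dots,p_{h-1}\}$; together with $\langle in_h\rangle$ and $\langle z_h\rangle$ (whose labels are already covered) this is all of $2^{\{p_1,\dots,p_h\}}$. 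In particular $\M_h^F$ has at least $2^h$ states, which already separates the family from finite state machines.

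Next I would isolate the key structural fact about \HSM s: in any \HSM\ $H$, a state $\langle u_1\dots u_m\rangle$ of $H^F$ is labeled exactly by $\prop(u_m)$, since $u_1,\dots,u_{m-1}$ are \boxes\ and every \boxhsm\ of an \HSM\ carries the empty label. Consequently the number of distinct labels occurring in $H^F$ is at most the number of \nnodes\ of $H$, hence at most $|H|$. Combining this with the previous paragraph, any \HSM\ $H$ with $H^F$ equivalent (isomorphic, as a Kripke structure over $\{p_1,\dots,p_h\}$) to $\M_h^F$ must have at least $2^h$ \nnodes, so $|H|\ge 2^h=2^{\Omega(|\M_h|)}$; and the same $2^h$ lower bound for finite state machines is trivial since such a machine is its own flat Kripke structure. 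This yields the claimed exponential gap in both directions.

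The hard part — really the only delicate point — is pinning down the notion of two models being ``equivalent'' and checking that the lower bound is robust to it: the bound on the number of distinct labels uses nothing about $H$ beyond the fact that $H^F$ is a labeled Kripke structure over the same $AP$ as $\M_h^F$ and carries the same set of labels, so it is insensitive to any re-encoding of the hierarchy (a different choice of machines, \boxes, chaining, or auxiliary self-loops). One should also write out the general family explicitly enough to re-verify the restricted-ness condition of Theorem~\ref{theo:succintezza1}'s hypotheses and the totality of the flattened transition relation, but these are routine once the pattern of Figure~\ref{esempioCHSM} is made parametric in $h$.
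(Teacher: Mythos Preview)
The paper does not actually prove this theorem; it is quoted from \cite{LNPP08} and only the statement is recalled here, so there is no in-paper proof to compare against. That said, the paper's Figure~\ref{esempioCHSM} is precisely the $h=3$ instance of the family you build, so your construction is clearly the intended one.

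Your argument is correct. The two ingredients---that $\M_h^F$ realises all $2^h$ subsets of $\{p_1,\dots,p_h\}$ as state labels, and that in any \HSM\ the label of a state $\langle u_1\dots u_m\rangle$ equals $\prop(u_m)$ because boxes carry empty labels---combine exactly as you say to force $\ge 2^h$ \nnodes\ in any \HSM\ with an equivalent flattening. The restrictedness check is fine: the only nonempty box label at level $j$ is $\{p_j\}$, and every descendant lives in some $M_i$ with $i<j$ and carries labels inside $\{p_1,\dots,p_{j-1}\}$. Your remark that the label-counting lower bound is robust to the precise notion of equivalence (isomorphism, bisimilarity, or even just agreement of the reachable label set) is also on point, since states with distinct labels are never bisimilar and all states of $\M_h^F$ are reachable along the single chain. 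Nothing substantive is missing.
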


There is an exponential gap also between \CHSM s and \VHSM s as
shown in the following proposition.

\begin{theorem}[\cite{LNPP08}]
\VHSM s can be exponentially more succinct than \CHSM s.
\end{theorem}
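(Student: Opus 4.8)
The plan is to exhibit a family $\{\K_n\}_{n\geq 1}$ of Kripke structures over $AP=\{p_1,\dots,p_n\}$ that is realised by an \VHSM\ of size $O(n)$ but forces every equivalent \CHSM\ to have $2^{\Omega(n)}$ machines. For the upper bound I would set $\K_n=\M_n^F$, where $\M_n=(G_0,G_1,\dots,G_{2n})$ is the ``doubled counter'' obtained by composing the standard hierarchical binary counter with itself over the \emph{same} propositions: $G_0$ is a two-\nnode\ machine, an initial \nnode\ and an output \nnode\ joined by an edge, both labelled $\emptyset$; for $1\leq i\leq n$ the ``inner'' machine $G_i$ and the ``outer'' machine $G_{n+i}$ each have two \boxes\ $b^0,b^1$ that both expand to the preceding machine, placed in series from the initial to the output \vertex, with $\prop(b^0)=\emptyset$ and $\prop(b^1)=\{p_i\}$ (together with self-loops keeping $\M_n^F$ total). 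A routine unfolding shows that $\M_n^F$ is a path of $\Theta(4^n)$ states in which the ``leaf'' state reached after outer choices $x\in\{0,1\}^n$ and inner choices $y\in\{0,1\}^n$ is labelled $\{p_i : x_i=1 \text{ or } y_i=1\}$, these states occurring in lexicographic order of $(x,y)$ with $x$ major and with short transit states in between. Since the \boxhsm\ $b^1$ of $G_{n+i}$ is labelled $\{p_i\}$ and is an ancestor of the \boxhsm\ $b^1$ of $G_i$, which is also labelled $\{p_i\}$, $\M_n$ is not restricted; it has $2n+1$ machines of bounded size, which settles the upper bound.

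The heart of the statement is the matching lower bound: every \CHSM\ $\mathcal{B}$ with $\mathcal{B}^F\cong\K_n$ has $2^{\Omega(n)}$ machines. I would first record the structural consequence of the restriction: whenever a machine $M$ of $\mathcal{B}$ is placed, through a chain of \boxes, into a scope $S\subseteq AP$, the set $S$ is disjoint from the set $P(M)$ of all atomic propositions occurring on $M$ or on any descendant of $M$, because every \boxhsm\ contributing a proposition to $S$ is an ancestor of every \vertex\ witnessing a proposition of $P(M)$. Hence each chunk of $\mathcal{B}^F$ produced by one occurrence of $M$ carries labels of the form $S\cup L$ with $L\subseteq P(M)$ and $S\cap P(M)=\emptyset$: in every such label the ``inherited part'' $S$ and the ``internal part'' $L$ are forced to be disjoint.

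Next I would exploit the shape of $\K_n$. Cut it into the $2^n$ maximal runs with a fixed outer vector $x$; inside run $x$ the label of state $(x,y)$ is $\{p_i : x_i=1\}\cup\{p_i : x_i=0 \text{ and } y_i=1\}$, so the propositions $\{p_i : x_i=1\}$ are present at \emph{every} state of run $x$, while each $p_i$ with $x_i=0$ switches on and off along the run exactly as bit $i$ of a binary counter. Combined with the previous paragraph, this means that in $\mathcal{B}$ the propositions of the first kind must be supplied by \boxes\ whose scope spans the whole of run $x$, whereas those of the second kind must be supplied by \boxes\ lying strictly inside it, so the hierarchical shape of $\mathcal{B}$ over run $x$ is dictated by $x$. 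The final step, which I expect to be the real obstacle, is to turn this into a counting bound. The intuition is that two runs $x\neq x'$ cannot be produced by the same machine $M$ of $\mathcal{B}$: choosing a bit $i$ on which $x$ and $x'$ differ, $p_i$ is constant throughout one run and genuinely oscillating throughout the other, so $M$ would simultaneously have to carry $p_i$ on a \boxhsm\ that is an ancestor of all of its \vertices\ (in the first scope) and let $p_i$ vary internally (in the second scope) — incompatible. Hence the $2^n$ runs force pairwise distinct machines and $|\mathcal{B}|\geq 2^n$. Making this last implication rigorous needs care, because $\K_n$ does reuse label sets legitimately — the set $T$ occurs $3^{|T|}$ times — so one must reason about the whole hierarchical decomposition witnessing a run rather than about its labels alone; this is precisely the technical content carried out in \cite{LNPP08}. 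Together with the $O(n)$ upper bound this yields the claimed exponential separation.
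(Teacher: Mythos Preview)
The paper does not contain a proof of this theorem: it is merely recalled, without argument, from \cite{LNPP08} (the sentence introducing the two succinctness results reads ``Let us recall two main results from \cite{LNPP08} on the succinctness of these models''). There is therefore no proof in the present paper against which to compare your proposal; the running example in the paper's figure is the restricted construction underlying the \emph{first} succinctness result (\CHSM\ versus \HSM), not this one.

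As for the proposal itself, the doubled-counter construction is a sensible witness on the \VHSM\ side, and your structural observation --- that in a \CHSM\ any scope $S$ in which a machine $M$ is placed must be disjoint from the set $P(M)$ of propositions appearing on $M$ or on any descendant of $M$ --- is correct and is exactly the leverage the restriction provides. The soft spot is where you yourself locate it: the claim that distinct outer vectors $x\neq x'$ force distinct machines tacitly presupposes that a hypothetical small \CHSM\ decomposes $\K_n$ along the $x$-runs, which it need not do; a competing \CHSM\ could carve $\K_n$ into pieces that straddle several runs or cover only fragments of one. Turning the oscillation/constancy dichotomy into a genuine $2^n$ lower bound requires an argument that applies to \emph{every} hierarchical decomposition of $\K_n$, not just the natural one, and you explicitly defer that step to \cite{LNPP08}. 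So what you have written is a plausible outline with the decisive combinatorial step outsourced rather than a self-contained proof --- which, to be fair, is also the status of the theorem in this paper.
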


Observe that  \HSM s, \CHSM s and \VHSM s can all be
translated to equivalent finite state machines with a single exponential
blow-up. Thus, the two succinctness results do not add up to each other, in the
sense that it is not true that \VHSM s can be double exponentially more
succinct than \HSM s.
\ignore{
From Proposition~\ref{theo:succintezza1} and the fact that any
\CHSM\ is also an \VHSM, we have the following.
\begin{corollary}
\VHSM s can be exponentially more succinct than
\HSM s and finite state machines.
\end{corollary}

}

\section{Model checking Problem}\label{sec:Algo}
The \textbf{\ctl model-checking} is the problem of verifying
whether a Kripke structure $\K$ models a \ctl formula.
For an \VHSM\ $\M$,  the \textbf{\ctl model-checking} is the problem of verifying
whether the flat structure $\M^F$ models a \ctl formula.
It is known that the \ctl model-checking problem can be solved in
linear time  in the size of both the formula and  the machine,
see \cite{CE82}, while it is exponential for both  \HSM\ and \VHSM.
More precisely, the following theorem holds.

\begin{theorem}[\cite{AY01},\cite{LNPP08}]\label{theo:ctlMc}
The \ctl\ model-checking  of an \VHSM\ $\M$ for a formula $\varphi$ can be solved in
$O(|\M| \, 2^{|\varphi|\cdot d + |AP_\varphi|})$ time, where $d$ is the maximum number of exit
nodes of $\M$ and $AP_\varphi$ is the set of atomic proposition occurring in $\varphi$. Moreover, if $\M$ is an \HSM, then it can be solved in
$O(|\M| \cdot 2^{|\varphi|\cdot d})$  time.
\end{theorem}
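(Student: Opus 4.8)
This is essentially the hierarchical \CTL\ model-checking algorithm of \cite{AY01}, extended to scope properties as in \cite{LNPP08}; we outline it. The plan is to run the classical linear-time bottom-up \ctl\ model checker of \cite{CE82} on the flat structure $\M^F$ \emph{without} ever constructing $\M^F$, by processing the machines in the order $M_1,\dots,M_h$ (legitimate since $\expand_i(u)<i$) and visiting each machine only once per relevant \emph{context}. The key observation is that the set of subformulas of $\varphi$ satisfied, in $\M^F$, at a state $\langle\alpha\, b\,\beta\rangle$ lying inside an instance of the machine $M_i=M_{\expand(b)}$ depends only on: $(i)$ the suffix $\beta$ (the \vertex\ of $M_i$ reached together with the sub-instances nested below it); $(ii)$ the structure of $M_i$; $(iii)$ the set $X\subseteq AP_\varphi$ of atomic propositions \emph{inherited} from the \boxes\ enclosing this instance --- for an \HSM\ this set is always empty, since $\prop(b)=\emptyset$ for every \boxhsm; and $(iv)$ for each exit \nnode\ $z\in\OUT_i$, the set $g(z)$ of subformulas holding at the successor states reached after leaving the instance through $z$, which is all the semantics of $E\X$, $E\G$ and $E\U$ needs to know about what happens after exiting $M_i$. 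Call such a pair $c=(X,g)$ a \emph{context} for $M_i$; the number of contexts for $M_i$ is at most $2^{|AP_\varphi|}\cdot\big(2^{|\varphi|}\big)^{|\OUT_i|}\le 2^{|AP_\varphi|+|\varphi|\cdot d}$, where $d=\max_i|\OUT_i|$ is the maximum number of exit \nnodes.

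First I would turn this observation into a precise invariant and prove it by structural induction on the subformula, using the least/greatest fixpoint characterisations of $E\U$ and $E\G$: a path that eventually leaves the current $M_i$-instance through an exit $z$ contributes exactly according to $g(z)$, whereas a path that remains inside the instance forever is evaluated within $M_i$ itself. Establishing this invariant is the technical heart of the argument and also the main obstacle, because of an apparent circularity: the context $(X\cup\prop(b),\,g_b)$ that a \boxhsm\ $b$ of a machine $M_j$ hands down to $M_{\expand(b)}$ has its component $g_b(z)$ equal to the subformula-labelling computed at the $z$-successors of $b$ inside $M_j$, which is itself an output of the computation on $M_j$. I would break the circularity exactly as in the flat case: for fixed $M_j$ and fixed context $c$, the subformula-labelling of the \nnodes\ of $M_j$ is obtained as the fixpoint of the monotone operator induced by the edges of $M_j$, in which a \boxhsm, via the already-computed table of the machine it expands to, acts as a bundle of ``macro-edges'' linking its entry \vertex\ to its exit \nnodes; then no guessing of $g_b$ is required.

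The algorithm maintains, for every machine $M_i$, a table mapping each context $c$ of $M_i$ to the induced subformula-labelling of the \nnodes\ of $M_i$. Processing $M_1,\dots,M_h$ in order, for each $M_j$ and each context $c=(X,g)$ it builds a flat Kripke structure $\K_{j,c}$ from $M_j$ by: relabelling every \nnode\ $u$ with $\prop(u)\cup X$; replacing every \boxhsm\ $b$ by a gadget of size $\mathrm{poly}(d,|\varphi|)$ read off the table entry of $M_{\expand(b)}$ under the induced context; keeping the edges of $M_j$ (including the $z$-successor edges); and encoding $g$ at the exit \nnodes. Running the flat model checker of \cite{CE82} on $\K_{j,c}$ produces the subformula-labelling of every \nnode\ of $M_j$, which is stored in $M_j$'s table at $c$; by the invariant above this labelling agrees, for each \nnode\ $v$ of $M_j$, with the $\M^F$-labelling of every state $\langle\alpha\, v\rangle$ whose enclosing instance of $M_j$ realises $c$. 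The top-level machine $M_h$ has a single trivial context ($X=\emptyset$, and $g$ irrelevant since the outermost instance is never exited), so whether $\M^F$ models $\varphi$ is read off $M_h$'s table at the initial \vertex\ $\iin_h$.

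For the running time, each $\K_{j,c}$ has size $O\big(|M_j|\cdot\mathrm{poly}(d,|\varphi|)\big)$ and the flat model check on it costs the same up to a further $|\varphi|$ factor; summing over all machines and their at most $2^{|AP_\varphi|+|\varphi|\cdot d}$ contexts gives $\big(\sum_{j=1}^{h}|M_j|\big)\cdot\mathrm{poly}(d,|\varphi|)\cdot 2^{|AP_\varphi|+|\varphi|\cdot d}=O\big(|\M|\cdot 2^{|\varphi|\cdot d+|AP_\varphi|}\big)$, the polynomial factor being absorbed into the exponential. If $\M$ is an \HSM\ then no atomic proposition is ever inherited, so $X=\emptyset$ always, the factor $2^{|AP_\varphi|}$ disappears, and the bound becomes $O\big(|\M|\cdot 2^{|\varphi|\cdot d}\big)$, as claimed.
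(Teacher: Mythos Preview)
Your sketch is essentially correct and aligns with the approach of \cite{AY01} for \HSM s together with the \VHSM-to-\HSM\ reduction the paper itself spells out just before Theorem~\ref{theo:gshsmMc}: track, for each machine instance, the set $X\subseteq AP_\varphi$ of inherited propositions (this is exactly the paper's construction of the machines $M_i^P$) and the information at the exit \nnodes, and process machines bottom-up. Note that the paper does not give its own proof of this theorem but simply cites \cite{AY01,LNPP08}.

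One organisational difference worth flagging: you package the exit-node information into a single global map $g:\OUT_i\to 2^{\{\text{subformulas}\}}$ and then must confront the circularity you describe, resolving it via a fixpoint. The approach actually taken in \cite{AY01}, and mirrored in the paper's graded algorithm, instead processes subformulas one at a time from innermost to outermost; for a fixed subformula $\psi$ the ``context'' at an exit \nnode\ concerns only $\psi$ (and already-labelled sub-subformulas), so the duplication factor per subformula is at most $2^d$, and after all $|\varphi|$ subformulas one reaches the same $2^{|\varphi|\cdot d}$ bound without any circularity to break. Both organisations are sound and yield the stated complexity; the subformula-at-a-time variant is what the cited references do and is slightly cleaner precisely because it sidesteps the issue you raise.
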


In this section we extend the result to  model-checking a hierarchical structure against a graded-\ctl\-formula.
We first show an algorithm for graded-\ctl\ model-checking  of an \HSM, and then
we  extend it to model-check \VHSM s.

The aim of the algorithm is to determine, for each node $u$ in a machine
$M_j$ of $\M$ and each subformula $\psi$ of $\varphi$, whether  $u$ \emph{satisfies}  $\psi$ or not.
Anyway, the concept of satisfiability  may be  ambiguous, since  whether
 $u$ satisfies  $\psi$ or not  may depend on the possible different sequences of boxes
which expand in $M_j$. Thus, the algorithm transforms $\M$ in such a way that
either  for every  box sequence $b_1,  \dots, b_m $ it holds that
$(\M^F,\langle b_1  \dots  b_m u \rangle) \models \psi $
(and in this case we say that $u$ satisfies $\psi$), or for every  $b_1,  \dots, b_m $ it holds that
$(\M^F,\langle b_1  \dots  b_m u \rangle) \models \neg \psi $.
This transformation determines multiple copies of each  $M_j$, for $j<h$
(clearly, since there are no nodes expanding in the top-level machine $M_h$,   there
is not such ambiguity for a $u\in M_h$).

The algorithm considers the subformulas $\psi$   of $\varphi$, starting from the innermost subformulas, and,
for each node $u$ in $\M$   sets $u.\psi=TRUE$ if $u$ satisfies $\psi$, modifying possibly the hierarchical
structure.
If $\psi$ is an atomic proposition or it is either  $\neg \theta$ or $\theta_1 \wedge \theta_2$, the algorithm is trivial.
For subformulas with temporal operators and grade $0$, then the algorithm behaves exactly as in \cite{AY01} for
the \ctl\ model-checking.
We now show how it behaves for subformulas of the form $\psi=E^{> k} \theta$, with $k > 0$
and  $\theta \in \{  \X \theta_1, \G \theta_1, \theta_1 \U \theta_2 \} $. By inductive hypothesis, we assume that
the algorithm has already set $u.\theta_i=TRUE$ if $u$ satisfies $\theta_i$, for $i=1,2$.

The algorithm for   $\psi=E^{> k} \X \theta_1 $ is rather simple.
It starts from the nodes of $M_1$ setting
$u.\psi=TRUE$ if $u$ satisfies $\psi$, and  then inductively considers  all the machines.
Let $u$ be a node  of $M_j$.
If  $u \notin OUT_j$, then it
satisfies $\psi$ if there are at least $k+1$ successors in $M_j$ satisfying $\theta_1$.
For an output  node $z \in OUT_j$, whether $z$ satisfies  $\psi$ depends also on the successors of a box expanding in $M_j$.
Multiple copies of $M_j$ are then created, denoted $M_j^g$, where  $g:OUT_j \rightarrow \{0,\ldots,k+1\}$,
which correspond  to the different contexts in which $M_j$ occurs.
The  nodes  of $M_j^g$  are   $u^g$, for a node  $u$ of $M_j$,  and the boxes are $b^g$, for a box $b$
of $M_j$.
The idea is that $ g(z)$ is the number of  $z$-successors,
satisfying $\theta_1$,  of a box expanding in $M_j$
(recall that the edges outgoing from a box $b$ are of the type $((b,z),v)$, and we call such $v$
a $z$-successor of $b$).
Thus,  the algorithm sets $z^g.\psi=TRUE$ if the sum of $g(z)$ and the number of successors in $M_j$ satisfying $\theta_1$,
is greater than $k$.
 Moreover, for each box $b$, the algorithm calculates   the number of  $z$-successors of $b$ satisfying  $\theta_1$.
 The new \HSM\ is then obtained by defining the new expansion of  $b$ in $M_j$:
 $b$ expands in the copy  $M_{\expand(b)}^g$ of $M_{\expand(b)}$ such that
$g(z)$ is  the number of  $z$-successors of $b$ satisfying  $\theta_1$.

Consider  now formulas of the type $\psi=E^{> k} \G \theta_1 $ and let us call $\psi^1=E^{> 0}  \G \theta_1 $.

The algorithm  first determines which nodes of the \HSM\ $\M$ satisfy the \ctl
formula $\psi^1$.  At the end of this step
 $\M$ is modified in such a way that  each node $u$ either  satisfies $\psi^1$ or
satisfies $\neg  \psi^1$. In doing that, the size of $\M$ may double (cf.~\cite{AY01}).
Call $S$ 
the set of the nodes satisfying $\psi^1$.

The algorithm determines, for  each node $u\in S$, whether $u$ satisfies $\psi$
using the following idea. Let a {\em sink-cycle} be a cycle
containing only nodes with out-degree $1$.

\noindent{\bf Claim 1.}
Consider
 the graph induced by the states of $\M^F$ where $\psi^1$  holds.
Then, given a state $s $, $(\M^F, s)
\models \psi$   iff in this graph
either there is a \emph{non-sink-cycle} reachable from $s$, or
there are $k+1$ pairwise distinct finite paths connecting $s$ to
\emph{sink-cycles}.

The algorithm  checks the property of the claim analyzing all the machines $M_j$ of $\M$
starting from the bottom-level machine $M_1$, which  contains no boxes.
For each machine $M_j$, it performs a preliminary step to determine
the set of non-sink-cycles $NSC_j \subseteq S$ of nodes $u \in V_j$ such that   a non sink-cycle
is reachable in $\M_j^F$ from  $\langle  u \rangle $, through nodes of $S$.

Then, in a successive  step, the algorithm detects the other nodes satisfying  $\psi$.
In particular for any detected node $u \in V_j$ and
for any sequence $\alpha$ of boxes (below we show how to remove this dependency from $\alpha$)
the following situation can occur:
 \begin{itemize}
 \item
there is a non-sink cycle reachable in $\M^F$ from a state $\langle \alpha u \rangle $ including only nodes in $S$;
\item
 $k+1$ paths start in $\M^F$
from  $\langle \alpha u \rangle$,  each going through nodes belonging to  $S$, and ending into sink-cycles.
\end{itemize}

Observe that, if the  non-sink cycle is in   $\M^F$, but it is not  in $\M_{j} ^F$, then
 $u \notin  NSC_j$ and thus
 the former case has not been detected by the algorithm in the previous preliminary step.

In order to  get that the above properties do not depend on the choice of
$\alpha$, also in  this case multiple copies of each $M_j$ are created, each for a different
context in which $M_j$ occurs.
Each copy is denoted $M_j^g$ where  $g:OUT_j \rightarrow \{0,\ldots,k+1\}$
is a mapping such that if $z$ does not satisfies $\psi^1$ then $g(z)=0$.
Its nodes and boxes are obtained by renaming  nodes and boxes of
$M_j$, as in the previous case.


Let us now give some details on how the above steps are realized.

The set $NSC_j$, for $j \in \{1,\dots,h\}$,
is computed by visiting a graph
$M'_j$, with the nodes in $V_j \cap S$.
If $j \neq 1$, then $M'_j$ contains also
 the boxes $b$ of $M_j$, such that $in_{\expand(b)} \in S$, and
new vertices $(b,z)$, for $z \in OUT_{\expand(b)}\cap S$ (recall that there are no boxes in $M_1$). The
edges of $M_j$ connecting the boxes and the nodes above are edges also of this graph,
moreover, there is an edge from $b$ to $(b,z)$ if there is a path from $in_{\expand(b)}$ to $z$ in $M_{\expand(b)}$,
constituted of all vertices not belonging to $NSC_{\expand(b)}$.

The algorithm  proceeds inductively,
starting from $M_1$. When  $M_j$ is considered, for $j> 1$,  we assume that
the sets  $NSC_{j'}$ have already been determined, for all $j'< j$, and that, for each $z \in OUT_{j'}$,
 it has also been checked
whether there is a path from $in_{j'}$ to $z$ , constituted of all vertices not belonging to $NSC_{j'}$
(observe that  this property is used to define the edges in  $M'_j$).
Moreover, we assume that, if there is such a path, it has also  been  checked whether  there are  vertices in the path with out-degree
greater than $1$ and
whether $z$ has an out-going edge within  $M'_{j'}$.
The result of this test  is useful to detect the non-sink cycles and thus to
determine the  set $NSC_j$.
In fact,   if  either a node $z \in OUT_{\expand(b)}$
 has an out-going edge or there is a vertex with out-degree
at least $2$ in the path from $in_{\expand(b)}$ to $z$, then
a cycle going through  $(b,z)$ in $M'_j$ determines  a non-sink cycle on the corresponding flat machine.

Once the set  $NSC_j$ has been computed, the algorithm sets $u.\psi=TRUE$ for all
$u \in NSC_j$ and then it performs the successive step considering only the remaining nodes.

For each $j$ and each
mapping  $g:OUT_j \rightarrow \{0,\ldots,k+1\}$, a dag $G_j^g$ is constructed with the
nodes $u \in V_j \cap S$ such that  $ u \notin NSC_j$, the boxes $b$ and the new vertices  $(b,z)$, for $z \in OUT_{\expand(b)}$,
 such that  both
$in_{\expand(b)}$ and $z $ satisfy $\psi^1$ and do not belong to $ NSC_{\expand(b)}$, and with
the exception that the sink cycles are substituted
by a single vertex. The edges in  $G_j^g$  are those of  $M_j$.

\ignore{
If $j>h$, we inductively assume that for  $j'<j$ and
for all the nodes $x$  of $M_{j'}^{g'}$,
for a mapping $g': OUT_{j'} \rightarrow \{0,\ldots,k+1\}$,
$x.\psi=TRUE$ if $x$ satisfies $\psi$ and

for a mapping $g': OUT_{j'} \rightarrow \{0,\ldots,k+1\}$,,
the maximum value $m$ in $\{1,\dots,k+1\}$
such that $in_{j'}^{g'}$ satisfies $\psi^m$ has been determined.}

 The algorithm
labels the vertices of $G_j^g$,  starting from  the leaves,   as follows.

\begin{itemize}
\item
 $z \in OUT_j$ is labeled by $g(z)$,
\item
if $x$ in $G_j^g$ is not a box  and
 has successors $x_1,\dots,x_s$, labeled by $l_1,\dots l_s$, then $x$ is labeled by
 $l=max \{l_1+ \dots +l_s, k+1\}$;
\item
for a box $b$, such that $\expand(b)=j'$, let $g'$ be the mapping  such that  $g'(z)=r$ if $(b,z)$
is labeled by $r$, for $z \in OUT_{j'}$.
If  $in_{j'}$ has been labeled by $i$ in the dag  $G_{j'}^{g'}$  then $b$ is labeled $i$ as well
(observe that  the labeling of $in_{j'}$ in   $G_{j'}^{g'}$ has already been determined,  since $j'<j$).
\end{itemize}

As said above,  new machines  $M_j^g$ have been constructed as copies of  $M_j$, by renaming its nodes and boxes.
Now, for each $u \in V_j$, the algorithm sets $u^g.\psi=TRUE$ if $u$ is labeled by $k+1$ in $G_j^g$.

Finally, the expansion mapping for $M_j^g$ is defined as follows: if $\expand_j(b)=j'$ then $b^g$
now expands into $M_{j'}^{g'}$, where $g'$ is such that  $g'(z)=r$  for $z \in OUT_{j'}$ which has been labeled by $r$ in
$G_{j'}^{g'}$.


Finally, for the case of a subformula $\psi =E^{>k} \theta_1 \U \theta_2$, for $k>0$, the algorithm
behaves in a similar way.
It first determines the nodes of $\M$ which satisfy
$E^{>0} \theta_1 \U \theta_2$  and then it
 determines, for  each node $u\in S$, whether $u$ satisfies $\psi$,
with an   approach suggested by the following claim.

\noindent{\bf Claim 2.}
Consider  the graph induced by the states of $\M^F$ where $E^{>0} \theta_1\U \theta_2$   holds,
and by deleting the edges outgoing from states
where $\theta_1$ does not hold.
Then, given a state $s $, $(\M^F, s)
\models \psi$   iff in this graph
either there is a \emph{non-sink-cycle} reachable from $s$, or
there are $k+1$ pairwise distinct finite paths connecting $s$ to
states where $\theta_2$  holds.

Thus, the main difference with respect to the steps described above, is in the definition of
the graphs $M'_j$ and $G_j^g$ since they now do not have edges  outgoing from states
where $\theta_1$ does not hold, in accordance to the Claim 2.
We will omit further details.

Now we can state the first main result,
where $|\varphi|$ is  the number of the boolean and temporal operators in  $\varphi$,
 $d$ is the maximum number of exit
nodes of $\M$ and $\bar k -2$ is the maximal constant occurring in a graded modalities of $\varphi$.

\begin{theorem}\label{theo:ghsmMc}
The graded-\ctl\ model-checking of an \HSM\ $\M$ can be solved in $O(|\M|
\cdot 2^{ |\varphi|\cdot d \cdot log\bar k })$.
\end{theorem}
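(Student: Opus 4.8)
The plan is to analyze the algorithm sketched above step by step, bounding both the running time of each phase and the blow-up in the size of the hierarchical structure, and then to compose these bounds over the at most $|\varphi|$ subformulas that carry a temporal operator. The key quantities to track are: (i) the number of distinct copies $M_j^g$ created for each machine $M_j$, and (ii) the cost of the graph computations (reachability of non-sink-cycles, the dag labeling) performed on each copy. Since a copy is indexed by a mapping $g:\OUT_j\rightarrow\{0,\ldots,k+1\}$ and $|\OUT_j|\le d$, there are at most $(\bar k)^{d}$ such mappings for a single machine, hence at most $2^{d\cdot\log\bar k}$ copies. Therefore processing one temporal subformula replaces $\M$ by a structure of size $O(|\M|\cdot 2^{d\log\bar k})$, and the work done on it (the preliminary $NSC_j$ computation and the leaves-to-root labeling of each $G_j^g$, both linear-time graph traversals, or the simple counting of successors in the $E^{>k}\X$ case) is linear in that size.

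Next I would handle the composition over subformulas. Processing the subformulas of $\varphi$ from innermost to outermost, each of the (at most $|\varphi|$) temporal subformulas multiplies the current size by a factor $2^{d\log\bar k}$ — counting also the possible doubling incurred when first making each node decide the grade-$0$ formula $\psi^1$ (as in \cite{AY01}), which only contributes a constant factor absorbed into the exponent. Hence after all subformulas the structure has size $O(|\M|\cdot 2^{|\varphi|\cdot d\cdot\log\bar k})$, and since the per-subformula work is linear in the current size, the total running time is dominated by this final bound, giving $O(|\M|\cdot 2^{|\varphi|\cdot d\cdot\log\bar k})$ as claimed. The correctness of the output — that at termination $u.\psi=\TRUE$ exactly when $u$ satisfies $\psi$ in every scope — follows from Claims 1 and 2 together with the fact that the copy $M_j^g$ isolates, for every box sequence $\alpha$ ending in a box that expands into $M_j$, the single relevant value of the counting function on the exit nodes, so that satisfaction becomes scope-independent on each copy.

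I expect the main obstacle to be the careful bookkeeping that the multiple copies do not compound \emph{across} subformulas in a worse-than-$2^{|\varphi|\cdot d\log\bar k}$ way: a priori one might fear that re-processing an already-copied structure for the next subformula raises each of its $2^{d\log\bar k}$ copies to the $2^{d\log\bar k}$ power, which would give a tower rather than a product in the exponent. The resolution is that when the new machine $M_j^g$ is built, its exit-node set $\OUT_j$ is \emph{unchanged} (renaming only nodes and boxes), so the number of distinct contexts for the \emph{next} subformula is still governed by $d$, not by the already-enlarged machine count; thus the copies for successive subformulas multiply rather than exponentiate. Making this invariant precise, and checking that each individual phase — in particular the inductive computation of $NSC_j$ via the auxiliary graphs $M'_j$ and the reduction of sink-cycles to single vertices in $G_j^g$ — genuinely runs in time linear in the size of the (current) structure, is the technical heart of the argument; the rest is routine accounting.
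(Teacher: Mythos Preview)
Your proposal is correct and follows essentially the same route as the paper: bound the per-subformula blow-up by $(\bar k)^d$ via the number of context mappings $g:\OUT_j\to\{0,\ldots,k+1\}$, observe that the per-subformula work is linear in the current structure, and compose multiplicatively over the at most $|\varphi|$ operators to obtain $O(|\M|\cdot \bar k^{|\varphi|\cdot d})=O(|\M|\cdot 2^{|\varphi|\cdot d\cdot\log\bar k})$. Your explicit invariant that the exit-node set of each copy $M_j^g$ is unchanged (so successive subformulas multiply rather than exponentiate the number of copies) is a useful clarification that the paper leaves implicit.
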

\begin{proof}
The algorithm sketched above considers the subformulas $\psi$ of $\varphi$, and,
for each node $u$ in $\M$,   sets $u.\psi=TRUE$ if $u$ satisfies $\psi$.
For $\psi=E^{> k} \theta$, with $k > 0$, and $\theta=  \X \theta_1$, the correctness of the algorithm is rather immediate,
while if either  $\theta=  \G \theta_1$ or $\theta =\theta_1 \U \theta_2$,
the correctness of the  algorithm mainly relies on the given claims.
For sake of brevity, we omit here the proof of the claims.

The crucial point is to prove that the algorithm
detects all the nodes $u$ in a machine $M_j$ such that
 a non-sink cycle is reached from $ \langle b_1 \dots b_m u \rangle $  along a path  including only nodes satisfying $E^{>0} \theta$.
 Let $u$ be a node in $M_j$. If there is a  non-sink cycle  reachable
from $\langle u \rangle $ in  $\M_j^F$,
including only nodes in the set $S$ of nodes satisfying $E^{>0} \theta$, then $u \in  NSC_j$
and  the algorithm sets $u.\psi = TRUE$.
Now suppose that there are boxes $b_1,\dots b_m$ and that a  non-sink cycle is reachable from $\langle b_1,\dots b_m u \rangle $ in
$\M_{j'} ^F$ (again including only nodes in $S$) and suppose also that no non-sink cycles are reachable from
$\langle b_r,\dots b_m u \rangle $, for $r > 1$. This implies that
there is $z_1 \in OUT_{\expand(b_1)}$,
and a  non-sink cycle  reachable from $\langle b_1 z_1 \rangle $ in  $\M_{j'} ^F$,
and
there are  $z_1,\dots,z_m$
 such that, for $i=1,\dots,m$,
\begin{itemize}
\item
$z_i \in OUT_{\expand(b_i)}$
\item
 $\langle z_m \rangle$ is reachable from $\langle u\rangle$, in $\M_{j} ^F$,
\item
$\langle z_i \rangle$ is reachable from $\langle b_{i+1},z_{i+1} \rangle$, in $\M_{\expand(b_{i+1})} ^F$

\end{itemize}
In this case the algorithm
sets $(b_1,z_1) \in NSC_{j'} $.
Moreover, in  the new \HSM\, each $b_i$ will expand in a
copy $M_{\expand(b_i)}^{g_i}$ of $M_{\expand(b_i)}$,
where $g_i$ is such that $g_i(z_i)=k+1$. And thus, called $u^g$ the copy of
$u$ in  in $M_j^g$,  the algorithm sets $u^g.\psi = TRUE$
Similarly, the algorithm detects all the nodes $u$ in $M_j$ such that $k+1$ paths  start from
$ \langle b_1 \dots b_m u \rangle $  ending
in sink cycles including only nodes in $S$.
To state the complexity of the algorithm, observe that, while processing a subformula
$\psi=E^{> k} \theta$, with $k > 0$
and  $\theta \in \{ \G \theta_1, \theta_1 \U \theta_2 \} $, the algorithm creates several copies of each
machine $M_j$,
 denoted $M_j^g$ where  $g:OUT_j \rightarrow \{0,\ldots,k+1\}$. Thus the size of the
current \HSM\ grows for a factor not exceeding  $\bar k ^d$, where  $d$ is the maximum number of exit
nodes of $\M$ and $\bar k -2$ is the maximal constant occurring in a graded modalities of $\varphi$.
Since, for each  operator in $\varphi$, the time spent by  the algorithm is linear in the size of the current
\HSM, than the overall running time is $O(|\M|
\cdot \bar k^{ |\varphi|\cdot d })=O(|\M| \cdot 2^{ |\varphi|\cdot d \cdot log\bar k })$.
\end{proof}
Let us remark that, although the multiple copies created by the given algorithm can be seen
as a  step towards the flattening of
the input \HSM, the resulting structure is in general much smaller than the corresponding flat Kripke structure.
To solve the graded-\ctl\ model-checking for \VHSM\ we show now how to reduce it to the model-checking
problem for \HSM.
Let  $\M=(M_1,M_2,\ldots,M_h)$ be an  \VHSM\ and let $\varphi$ be a graded-\ctl\ formula.
Let $AP_\varphi$ be the set of atomic propositions that occur
in $\varphi$. The first step of our algorithm consists of constructing
an \HSM\ $\M_\varphi$ such that $\M_\varphi^F$ is isomorphic to $\M^F$.
Let $index:\{1,\ldots,h\}\times 2^{AP_\varphi}\rightarrow \{1,\ldots,h\,2^{|AP_\varphi|}\}$
be a bijection such that
$index(i,P)<index (j,P')$ whenever
$i<j$. Clearly, $index$ maps  $(i,P)$ into a strictly
increasing sequence of consecutive positive integers starting from $1$.
For a machine
$M_i=(V_i,in_i,\OUT_i, \prop_i,\expand_i, E_i)$, $1\le i\le k$ and
$P\subseteq AP_\varphi$,
define $M_i^P$ as the machine $(V_i^P,in_i^P,\OUT_i^P,
\prop_i^P,\expand_i^P, E_i^P)$
where:
\begin{itemize}
\item $V_i^P=\{u^P\,|\, u\in V_i\}$, and $\OUT_i^P=\{u^P\,|\, u\in \OUT_i\}$;
\item $\prop^P_i(u^P)=\prop_i(u)$ if $u$ is a \nnode\ and
      $\prop^P_i(u^P)=\emptyset$, otherwise;
\item $\expand_i^P(u)=0$ if $u$ is a \nnode\ and
      $\expand_i^P(u)=index(\expand_i(u),P\cup \prop_i(u))$, otherwise;
\item $E_i^P=\{(u^P,v^P)\,|\, (u,v)\in E_i\}\cup
             \{((u^P,z^{P\cup \prop_i(u)}), v^p)\,|\, ((u,z),v)\in E_i\}$.
\end{itemize}
Let $h'=h\,2^{|AP_\varphi|}$. We define $\M_\varphi$ be the tuple of
machines $(M'_1,\ldots,M'_{h'})$
such that for $j=1,\ldots,h'$, $M'_j=M^P_i$ where $j=index(i,P)$.
From the definition of $M^P_i$ it is simple to verify that $\M_\varphi$
is an \HSM\ and $|\M_\varphi|$ is $O(|\M|\,2^{|AP_\varphi|})$.
Moreover, $\M_\varphi^F$ and $\M^F$ coincide,
up to a renaming of the states.
Thus, from Theorem~\ref{theo:ghsmMc}, we have the following second main result.
\begin{theorem}\label{theo:gshsmMc}
The graded \ctl\ model checking of an \VHSM\ can be solved in
$O(|\M| \, 2^{|\varphi|\cdot d \cdot log\bar k + |AP_\varphi|})$ time.
\end{theorem}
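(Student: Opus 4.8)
The plan is to reduce graded-\ctl\ model-checking of an \VHSM\ to the same problem for an ordinary \HSM, and then to invoke Theorem~\ref{theo:ghsmMc}. The only feature of an \VHSM\ absent from an \HSM\ is that a box $b$ carries a label $\prop(b)$ inherited by all of its descendants; since $\varphi$ only mentions the propositions in $AP_\varphi$, the sole information about the current scope that is relevant to the evaluation of $\varphi$ is which subset of $AP_\varphi$ it contributes. There are only $2^{|AP_\varphi|}$ such subsets, so the idea is to keep $2^{|AP_\varphi|}$ labelled copies of each machine $M_i$, one per possible inherited scope $P\subseteq AP_\varphi$, and to compile the box labels away by making every box point to the copy that records its own contribution.

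I would then carry out the construction sketched above. For each $M_i$ and each $P\subseteq AP_\varphi$, form $M_i^P$: its \nnodes, boxes and exit nodes are renamed copies of those of $M_i$; every box is given the empty label, so that $\M_\varphi$ is a genuine \HSM; each \nnode\ carries $\prop_i(u)$ together with the inherited scope $P$; and each box $u$ with $\expand_i(u)=j$ is redirected to the copy $M_j^{P'}$, where $P'$ accumulates $P$ with the $AP_\varphi$-part of $\prop_i(u)$, with the edges relabelled accordingly. Choosing the bijection $index$ to be monotone in the first coordinate guarantees that listing the machines in the order it induces keeps the expansion function strictly decreasing, so $\M_\varphi=(M'_1,\ldots,M'_{h'})$ with $h'=h\,2^{|AP_\varphi|}$ is a well-formed \HSM; by construction $|\M_\varphi|=O(|\M|\,2^{|AP_\varphi|})$, since there are $h\,2^{|AP_\varphi|}$ machines, each the size of some $M_i$ up to renaming.

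The heart of the argument is to show that $\M_\varphi^F\cong\M^F$ up to a renaming of states, with matching labels once restricted to $AP_\varphi$. I would prove this by induction on the hierarchy height, exhibiting the bijection sending a complete \word\ $\langle b_1\cdots b_m u\rangle$ of $\M$ to $\langle b_1^{P_1}\cdots b_m^{P_m}u^{P_m}\rangle$, where $P_1=\emptyset$ and $P_{\ell+1}=P_\ell\cup(\prop(b_\ell)\cap AP_\varphi)$: the definition of the redirected expansion is exactly what makes the image a complete \word\ of $\M_\varphi$; the transitions of $\M^F$, both those inherited from lower-level flattenings and those coming from the edges of $M_i$, are matched one-for-one by the transitions of $\M_\varphi^F$ arising from the $E_i^P$; and the label of a state of $\M^F$, intersected with $AP_\varphi$, equals the label of its image. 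Since $\varphi$ speaks only of propositions in $AP_\varphi$, a routine structural induction on $\varphi$ then gives that $\M^F$ models $\varphi$ iff $\M_\varphi^F$ models $\varphi$. The main obstacle is keeping this correspondence honest: one must check that distinct scopes are never conflated into the same replica, otherwise spurious transitions would appear, and that nothing outside $AP_\varphi$ enters the replica index, which is exactly what pins the blow-up at $2^{|AP_\varphi|}$ rather than $2^{|AP|}$ or something depending on $h$.

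It then remains to read off the complexity. Applying Theorem~\ref{theo:ghsmMc} to $\M_\varphi$ costs $O(|\M_\varphi|\cdot 2^{|\varphi|\cdot d'\cdot\log\bar k})$, where $d'$ is the maximum number of exit nodes of $\M_\varphi$; but $\OUT_i^P$ is a renamed copy of $\OUT_i$, so $d'=d$, and neither $\varphi$ nor $\bar k$ is altered by the reduction. Substituting $|\M_\varphi|=O(|\M|\,2^{|AP_\varphi|})$ yields the claimed bound $O(|\M|\cdot 2^{|\varphi|\cdot d\cdot\log\bar k+|AP_\varphi|})$.
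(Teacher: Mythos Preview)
Your proposal is correct and follows essentially the same reduction as the paper: build $2^{|AP_\varphi|}$ scope-indexed copies of each machine, redirect each box to the copy recording its accumulated scope via a first-coordinate-monotone bijection $index$, observe that the resulting \HSM\ has size $O(|\M|\cdot 2^{|AP_\varphi|})$ and unchanged exit-node bound $d$, and then invoke Theorem~\ref{theo:ghsmMc}. You actually supply more detail than the paper---the explicit state bijection, the check that $d'=d$, and in particular the addition of the inherited scope $P$ to each \nnode's label, a step the paper's construction omits but which is needed for the labels of $\M_\varphi^F$ and $\M^F$ to agree on $AP_\varphi$.
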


\section{Conclusions}\label{sec:Conclusions}
In this paper we have proposed the use of graded-CTL specifications to model-check
hierarchical state machines. We think that the added power in the specification formalism can be fruitfully
exploited in the simulation and testing community to get more meaningful test benches to
perform simulation of more and more complex systems.
We have given algorithms for checking classical \HSM s and so-called \VHSM s.
Let us observe that the alternative approach of model-checking the
fully expanded flat structure
has in general a worse performance because of the exponential gap between an \HSM\ and its corresponding
flat structure. In fact the gain in size of the hierarchical model, is in practice
much greater than the extra exponential factor paid, which depends on the size of (the formula for)
the spe\-ci\-fication, usually quite small.
One last consideration is that we have considered only sequential hierarchical finite state
machines (as an abstraction of the DEVS model).
It is a standard approach, when model checking concurrent systems, to first sequentialize
the model of the SUT (possibly on-the-fly) and then check it with model checking algorithms
for sequential models.
Moreover, the cost of considering
parallel and communicating machines would lead to a double exponential blow-up, the so-called
state explosion problem.
\\\noindent{\bf Acknowledgements.} We thank the anonymous referees for their valuable
comments.
\bibliographystyle{plain}

\end{document}